\newtheorem{Lemma}{Lemma}
\newtheorem{lemma}[Lemma]{$\mathbf{Lemma}$}
\newcounter{problem}
\newcounter{save@equation}
\newcounter{save@problem}
\newenvironment{problem}
{\setcounter{problem}{\value{save@problem}}%
  \setcounter{save@equation}{\value{equation}}%
  \let\c@equation\c@problem
  \subequations
}
{\endsubequations
  \setcounter{save@problem}{\value{equation}}%
  \setcounter{equation}{\value{save@equation}}%
}
\begin{document}%%
\title{ {\Huge Robust Beamforming Design for OTFS-NOMA}}

\author{ Zhiguo Ding, \IEEEmembership{Senior Member, IEEE}  \thanks{ 
  
\vspace{-2em}

 The work of Z. Ding was supported by the UK   Engineering and Physical Sciences Research Council under grant number EP/P009719/2 and by H2020-MSCA-RISE-2015 under grant number 690750.

     Z. Ding
 is    with the School of
Electrical and Electronic Engineering, the University of Manchester, Manchester, UK (email: \href{mailto:zhiguo.ding@manchester.ac.uk}{zhiguo.ding@manchester.ac.uk}.

  }\vspace{-1em}}
 \maketitle
 
\begin{abstract} 
This paper considers the design of beamforming   for orthogonal time frequency space modulation assisted non-orthogonal multiple access (OTFS-NOMA) networks, in which a high-mobility user is sharing the spectrum with multiple low-mobility NOMA users.   In particular, the beamforming design is formulated as an optimization problem whose objective is to  maximize  the low-mobility NOMA users' data rates while guaranteeing that the high-mobility user's targeted data rate can be met.  Both the cases with and without channel state information errors are considered, where low-complexity solutions are developed by applying successive convex approximation  and semidefinite relaxation. Simulation results are also provided to show that the use of the proposed beamforming schemes can yield a significant performance gain over random beamforming. 

%identify   the criteria for the selection of the two proposed SGF mechanisms as well as  appropriate successive interference cancelation decoding orders. 

\end{abstract} \vspace{-1em}
 \section{Introduction}
In conventional non-orthogonal multiple access (NOMA) networks, spectrum sharing among multiple users is encouraged under the condition that users can be distinguished  based on  their channel conditions \cite{NOMAPIMRC,Nomading}. A recent work in \cite{8786203} proposed a new form of NOMA, termed OTFS-NOMA, which applies  orthogonal time frequency space modulation (OTFS) to NOMA and yields an alternative method to  distinguish  users by  their mobility profiles. In particular, OTFS-NOMA is motivated by the fact that conventional OTFS   mainly relies on the delay-Doppler plane, where   a high-mobility user's signals are converted   from the time-frequency plane to the delay-Doppler plane, such that the user can experience time-invariant channel fading    \cite{OTFS2, OTFS, 8424569,8516353,8503182}.  Unlike conventional OTFS, OTFS-NOMA   uses the bandwidth resources available in both the time-frequency plane and the delay-Doppler plane \cite{8786203, otfsnoma1,nomaotfs2}. As a result,  in OTFS-NOMA, high-mobility users can be still served with time-invariant channels in the delay-Doppler plane, and the resources in the time-frequency plane can also been released to those low-mobility users, which improves the overall spectral efficiency.   Such spectrum sharing among the users with different mobility profiles can be particularly important to 5G and beyond communication scenarios, where  some users might be static, e.g., Internet of Things (IoT) sensors, and there might be some users which are moving at very high speeds, e.g., users in a high-speed train. 

In this paper, we focus on  an OTFS-NOMA downlink transmission scenario, in which a high-mobility user  and multiple low-mobility NOMA users share the spectrum and are served simultaneously by a base station. Unlike \cite{8786203},  we assume  that   the base station has multiple antennas and each user has a single antenna. The design of beamforming is considered  in   this paper, where our objective is to maximize  the low-mobility NOMA users' data rates while guaranteeing that  the high-mobility user's targeted data rate can be be met. Because of the users' heterogeneous mobility profiles, we assume that the low-mobility NOMA users' channel state information (CSI) is perfectly known by the base station, but there exist errors for the high-mobility user's CSI. In the presence of these CSI errors, a robust beamforming optimization problem is formulated and solved by applying successive convex approximation (SCA). We note that, in the case with perfect CSI, the formulated robust beamforming design problem can be degraded to a simplified form   which facilitates the development of a more computationally efficient method based on 
semidefinite relaxation (SDR). Computer simulation results are provided to demonstrate that the proposed SCA robust beamforming scheme can efficiently utilize the spatial degrees of freedom and achieve a significant performance gain over the case with a randomly chosen beamformer. In the case with perfect CSI, the developed SDR method can realize a better performance than the SCA-based scheme, where both the proposed schemes outperform random beamforming.

 \section{ System Model } \label{section 2}
 This paper considers an OTFS-NOMA downlink scenario, 
  with  one base station communicating with $(M+1)$   users, denoted by $\text{U}_i$, $0\leq i\leq M$. The base station has $V$ antennas, and each user is equipped with a single antenna.

As in \cite{8786203}, $\text{U}_0$ is assumed to be a high-mobility user and  its $N M$  information bearing  signals, denoted by $x_0[k,l]$, $k\in\{0, \cdots, N-1\}$, $l\in\{0,\cdots, M-1\}$, are placed in the delay-Doppler plane, where $M$ and $N$ denote the OTFS   parameters and define how   the delay-Doppler and time-frequency planes are partitioned, e.g., the time-frequency plane can be  partitioned  by sampling at intervals of $T$ and frequency spacing $\Delta f$.    By using the inverse symplectic finite Fourier transform (ISFFT),    $\text{U}_0$'s signals    are converted from the delay-Doppler plane to     the time-frequency plane   \cite{OTFS}:
\begin{align}
\mathbf{x}_0^{\rm{TF}} =& \mathbf{F}^H_N\otimes \mathbf{F}_M \mathbf{x}_0
,
\end{align}
where $\mathbf{F}_n$ denotes an $n\times n$ discrete Fourier transform
(DFT) matrix and $\mathbf{x}_0$ is an $NM\times 1$ vector collecting all $x_0[k,l]$. Denote $X_0[n,m]$ by the $(nN+m+1)$-th element of $\mathbf{x}_0^{\rm{TF}}$, where $0\leq n \leq N-1$ and $0\leq m\leq M-1$. 
 
 Without using NOMA, i.e., OTFS with orthogonal multiple access (OTFS-OMA), only the high-mobility user, $\text{U}_0$, is served during   $NT$ and $M\Delta f$, whereas other users cannot be admitted to these time slots or frequencies.  The main motivation for using OTFS-NOMA is to create an opportunity for ensuring that  the bandwidth resources, $NT$ and $M\Delta f$, can   be shared between the high-mobility user and additional low-mobility users, by applying the principle of NOMA.  Such spectrum sharing is particularly important   if the high-mobility user has weak channel conditions or   needs to be served with a small data rate only\footnote{We note that even if the high-mobility user's channel conditions are strong and this user wants to be served with a high data rate, channel uncertainties caused by the user's high mobility can still reduce the user's achievable data rate. As a result, it is spectrally inefficient to allow  all the bandwidth resources to be solely occupied by the high-mobility user.}.  Similar to \cite{8786203}, the $M$ NOMA  users, $\text{U}_i$, $1\leq i \leq M$, are assumed to be low-mobility users, and their signals, denoted by $X_i[n,m]$,  are placed directly in the time-frequency plane, and are superimposed with the high-mobility user's   signals, $X_0[n,m]$.    As in \cite{8786203},  we assume that the NOMA users' time-frequency signals are generated as follows:  
\begin{align}\label{indooruser2}
X_i[n,m] =\left\{\begin{array}{ll} x_i(n) & \text{if}\quad m=i-1 \\ 0 &\text{otherwise}\end{array}\right.,
\end{align}
  where $x_i(n)$, for $1\leq i \leq M$ and $0\leq n\leq N-1$, are $\rm{U}_i$'s information bearing signals.  

At its $v$-th antenna, $1\leq v\leq V$, the base station superimposes $\text{U}_0$'s   signals with the NOMA users' signals by using the beamforming coefficient, denoted by $w_v$, as follows~: 
\begin{align}\label{downlink1}
X^v[n,m] =&  w_v\sum^{M}_{i=0} X_i[n,m].
\end{align}
We note that the use of power allocation among the users can further improve the performance of OTFS-NOMA, however, scaling the $(M+1)$ users' signals with different coefficients makes an explicit expression for the  signal-to-interference-plus-noise ratio (SINR) difficult to obtain. Therefore,  the design of beamforming is focused in this paper, where the joint design of beamforming and power allocation is a promising direction for future research but beyond the scope of this paper. 

Following steps similar to those in \cite{8786203,OTFS, OTFS2,8424569},    the received signal  at $\text{U}_i$  in the time-frequency plane  can be modelled  as follows: 
\begin{align}\label{downlink1x}
Y_i[n,m] =& \sum^{V}_{v=1}w_v H^v_i[n,m]X[n,m] +W_i[n,m],
\end{align}
where $W_i(n,m)$ is the white Gaussian noise in  the time-frequency plane, and $H^v_i[n,m]$ denotes the time-frequency channel gain between the $v$-th antenna at the base station and $\text{U}_i$. The users' detection strategies are described in the following subsections. 
\vspace{-1em}

\subsection{  Detecting   the High-Mobility User's Signals} \label{section 4} 
Each user will first try to detect  $\text{U}_0$'s signals in the delay-Doppler plane. In particular, the system model in the delay-Doppler plane at $\text{U}_i$' is given by \cite{8786203}: 
\begin{align}\label{yx model2}
 \mathbf{y}_i = \sum^V_{v=1}w_v\mathbf{H}^v_i \mathbf{x}_0 +\underset{\text{Interference and noise terms}}{\underbrace{ \sum^V_{v=1}w_v\sum^{M}_{q=1} \mathbf{H}_i^v {\mathbf{x}}_q +\mathbf{z}_i}},
\end{align}
where   $\mathbf{H}_i^v$ is an $NM\times NM$ block-circulant matrix,  $\mathbf{y}_i$ denotes $\text{U}_i$'s $NM$ observations in the delay-Doppler plane, $\mathbf{x}_q$ denotes $\text{U}_q$'s   signals in the delay-Doppler plane and $\mathbf{z}_i$ denotes the noise vector. 

 In this paper, the use of a frequency-domain linear equalizer is considered\footnote{ We note that equalization is still carried out in the delay-Doppler plane, not in the time-frequency plane. The terminologies,  frequency-domain equalizers, are used  because the system model in \eqref{yx model2} can be regarded as a model in conventional single carrier cyclic prefix systems, to which various equalization techniques, termed frequency-domain equalizers, have been developed \cite{8786203}.  In addition to the frequency-domain linear equalizer, we note that other types of equalizers can also be used. For example, one can   use a frequency-domain decision feedback equalizer (FD-DFE). As shown in \cite{8786203}, the use of such more advanced equalizers can further improve the performance of OTFS-NOMA, but results in more computational complexity. }. In particular, by applying the detection matrix, $\left( \mathbf{F}_N^H \otimes \mathbf{F}_M\right)\left(\sum^{V}_{v=1}w_v\mathbf{D}_i^v \right)^{-1}\mathbf{F}_N \otimes \mathbf{F}_M^H  $, to  the observation vector $\mathbf{y}_i$,  the received signals  for OTFS-NOMA downlink transmission can be written as follows:  
\begin{align}\label{approach 1 system model 1}
&\left( \mathbf{F}_N^H \otimes \mathbf{F}_M\right) \mathbf{D}_i^{-1}\mathbf{F}_N \otimes \mathbf{F}_M^H   {\mathbf{y}}_i\\\nonumber
 = &  \mathbf{x}_0 +\underset{\text{Interference and noise terms}}{\underbrace{ \ \sum^{M}_{q=1}   {\mathbf{x}}_q  + \left( \mathbf{F}_N^H \otimes \mathbf{F}_M\right) \mathbf{D}_i^{-1}\mathbf{F}_N \otimes \mathbf{F}_M^H{\mathbf{z}}_i}},
\end{align} 
where $\mathbf{D}_i=\sum^{V}_{v=1}w_v\mathbf{D}_i^v $, $\mathbf{D}_i^v $
  is a diagonal matrix whose $(kM+l+1)$-th  main diagonal element is given by  
\begin{align}
D_{i}^{k,l,v}= \sum^{N-1}_{n=0}\sum^{M-1}_{m=0} a_{i,n}^{m,1,v} e^{j2\pi \frac{lm}{M}} e^{-j2\pi \frac{kn}{N}},
\end{align}
for $0\leq k \leq N-1$, $0\leq l \leq M-1$, and 
 $a^{m,1,v}_{i,n}$ is the element located in the $(nM+m+1)$-th row and the first column of $\mathbf{H}_i^v$.  Therefore,  the SINRs for detecting  $x_0[k,l]$,  $0\leq k \leq N-1$ and $0\leq l \leq M-1$, are identical  and   given by 
\begin{align}\label{sinrzf}
\text{SINR}_{i}^{\text{I}} =&  \frac{\rho }{\rho  +\frac{1}{NM}\sum^{N-1}_{\tilde{k}=0}\sum^{M-1}_{\tilde{l}=0} \left|\sum^{V}_{v=1}w_vD_i^{\tilde{k},\tilde{l},v}\right|^{-2}},
\end{align}
where $\rho$ denotes the transmit signal-to-noise ratio (SNR).

\subsection{Detecting   the Low-Mobility NOMA Users' Signals} \label{subsection 2}
Assume that $\text{U}_0$'s   signals can be decoded and  removed  successfully, which means that, in the time-frequency plane, the NOMA users observe the following: 
\begin{align} 
Y_i[n,m]   =&\sum^{V}_{v=1}w_v  H^v_i[n,m] x_{m+1}(n) +W_i[n,m]
.
\end{align}
Therefore,    the signals for $\rm{U}_i$  experience the following SNR: 
\begin{align}\label{snrx}
\text{SNR}_{i}^{\rm{II}} = \rho   \left|\sum^{V}_{v=1}w_v {D}_i^{0,i-1,v}\right|^2,
\end{align}
where ${D}_i^{0,i-1,v}$ is used since it is assumed that the low mobility NOMA users' channels are time invariant.  
 
% \begin{problem}\label{pb:2}
%  \begin{alignat}{2}
%    &  {\mathbf{x}}
%    &\qquad & \sum_{i=1}^nv_ix_i\label{obj:2}\\
%    & \text{subject to}
%    & & \sum_{i=1}^nw_ix_{i}\leqslant W,\label{c:1}\\
%    & & &\!\begin{aligned} (c_1 +c_2 +\dots+c_n&)x_ix_j\leqslant C,\\ & \forall i\in\{1,\ldots,n\}, j\in\{1,\ldots,n\},
%    \end{aligned}\label{c:2}\\
%    & & & x_{ i }\in\{0, 1\},\; \forall i=1\ldots,n.\label{c:3}
%  \end{alignat}
%\end{problem}

Define $\mathbf{h}_{i,l}=\begin{bmatrix}{D}_i^{0,l,1}&\cdots &{D}_i^{0,l,V}\end{bmatrix}^T$, and $\mathbf{g}_{k,l}=\begin{bmatrix}{D}_0^{k,l,1}&\cdots &{D}_0^{k,l,V}\end{bmatrix}^T$. Because of the $\rm{U}_0$'s high mobility, it is assumed that the base station does not have the perfect knowledge of $\rm{U}_0$'s CSI, and this CSI uncertainty is modelled as follows \cite{5164911,6626661}:
\begin{align}
\mathbf{g}_{k,l} = \hat{\mathbf{g}}_{k,l} +\mathbf{e}_{k,l}, 
\end{align}
where $\hat{\mathbf{g}}_{k,l}$ denotes the channel estimates available at the base station and $\mathbf{e}_{k,l}$ denotes the  CSI errors. In particular, it is assumed that the CSI errors are bounded as follows: $\mathbf{e}_{k,l}^H\mathbf{C}_{k,l}\mathbf{e}_{k,l}\leq 1$. As in \cite{5164911,6626661}, we choose $\mathbf{C}_{k,l}=\sigma^{-2}\mathbf{I}_V$, where $\sigma$ denotes the parameter for indicating the accuracy of the channel estimates.  

In this paper, we focus on the robust beamforming design problem which can be formulated as follows:  
 \begin{problem}\label{pb:1}
  \begin{alignat}{2}
\rm{max.} &\qquad {\rm min.} \quad \left\{ \log(1+\text{SINR}_{i}^{\text{II}}), 1\leq i\leq M \right\}\label{obj:1} \\
\rm{s.t.} & \qquad \! \begin{aligned} \underset{\substack{ \mathbf{e}_{ {k}, {l}}^H \mathbf{e}_{ {k}, {l}}\leq \sigma^{2}\\ 0\leq k\leq N-1, 0\leq l \leq M-1
}}{\rm min.}& \log(1+\text{SINR}_{i}^{\text{I}})\geq R_0, \\&\qquad \qquad0\leq i \leq M\end{aligned}\label{st:1}
\\
& \qquad \mathbf{w}^H\mathbf{w}\leq 1,\label{st:2}
  \end{alignat}
\end{problem} 
where  $\mathbf{w}$ denotes  a $V\times1$ beamforming vector and $R_0$ denotes  the high-mobility user's target data rate.  The aim of the optimization problem formulated in \eqref{pb:1}  is explained in the following. The objective function in   \eqref{obj:1} is   the minimum of the $M$ low-mobility NOMA users' data rates,  i.e., the objective of the problem formulated in (P1) is to maximize the low-mobility NOMA users' data rates. \eqref{st:1} is to ensure that the first stage of successive interference cancellation (SIC) is successful at both the low-mobility and high-mobility users, and \eqref{st:2} is to ensure the power normalization. It is important to point out that  the constraints shown in \eqref{st:1} serve two   purposes. One is to ensure that the high-mobility user's  data rate realized by the OTFS-NOMA transmission scheme is larger than this user's target data rate,   $R_0$, i.e., the high-mobility user's target data rate can be guaranteed.    The other is to ensure that   the low-mobility NOMA users can successfully decode the high-mobility user's signals. Without these constraints, it is possible that one low-mobility user fails the first stage of 	SIC, which means that the rate $\log(1+\text{SINR}_{i}^{\text{II}})$ might not be achievable.

 \section{Low-Complexity Beamforming Design}
 In this section, we first focus on  the case $\sigma\neq 0$, where we will convert the  objective function into a convex form, then recast the robust beamforming design problem to an equivalent  form without CSI errors $\mathbf{e}_{k,l}$, and finally apply SCA to obtain a low-complexity solution. In addition, the case with perfect CSI will also be studied, where an optimal solution based on SDR can be obtained for a special case.  
 
\subsection{Robust Beamforming Design}
By using  the expressions of the SINRs in \eqref{sinrzf}  and the SNRs in   \eqref{snrx} and with some algebraic manipulations,    the robust beamforming design     problem can be recasted   as follows:
 \begin{problem}\label{pb:3}
  \begin{alignat}{2}
\rm{max.} &\qquad {\rm min.} \quad \left\{ |\mathbf{w}^H\mathbf{h}_{i,i-1}|^2, 1\leq i\leq M \right\} \\
\rm{s.t.} &   \qquad  
\!\begin{aligned}  \sum^{N-1}_{ {k}=0}\sum^{M-1}_{ {l}=0}  \underset{  \mathbf{e}_{ {k}, {l}}^H \mathbf{e}_{ {k}, {l}}\leq \sigma^{2}}{{\rm max.}} (\mathbf{w}^H(\hat{\mathbf{g}}_{ {k}, {l}}+\mathbf{e}_{ {k}, {l}})\\\times(\hat{\mathbf{g}}_{ {k}, {l}}+\mathbf{e}_{ {k}, {l}})^H\mathbf{w})^{-1}  \leq \epsilon 
    \end{aligned} \label{st 30}
\\
& \qquad   \sum^{M-1}_{ {l}=0} \frac{1}{\mathbf{w}^H\mathbf{h}_{   {i,l}}\mathbf{h}_{  {i,l}}^H\mathbf{w}  }\leq \epsilon_1 , 1\leq i \leq M\label{st 31}
\\
& \qquad \mathbf{w}^H\mathbf{w}\leq 1,\label{st 32}
  \end{alignat}
\end{problem} 
$\eta=2^{R_0}-1$, $\epsilon=\rho NM(\eta^{-1}-1)$ and $\epsilon_1=\rho M(\eta^{-1}-1)$.

In order to remove the minimization operator in the objective function, we recast \eqref{pb:3} as follows:
 \begin{problem}\label{pb:41}
  \begin{alignat}{2}
\rm{max.} &\qquad    z \\
\rm{s.t.} & \qquad  \label{410} - |\mathbf{w}^H\mathbf{h}_{i,i-1}|^2\leq -z, 1\leq i\leq M  \\ &\qquad   \eqref{st 30}, \eqref{st 31}, \eqref{st 32}.\nonumber
  \end{alignat}
\end{problem}

Although the objective function in \eqref{pb:41} becomes an affine function, the constraint  in \eqref{410} is not convex, since the left-hand side of the inequality is a concave function. Therefore, we further recast  \eqref{pb:41} as follows:
 \begin{problem}\label{pb:42}
  \begin{alignat}{2}
\rm{min.} &\qquad  t   \\
\rm{s.t.} & \qquad  \label{411}  \frac{1}{|\mathbf{w}^H\mathbf{h}_{i,i-1}|^2}\leq  t, 1\leq i\leq M  \\ &\qquad   \eqref{st 30}, \eqref{st 31}, \eqref{st 32}.\nonumber
  \end{alignat}
\end{problem}   
The left-hand side of the constraint in \eqref{411} is a convex function, as can be shown  in the following. Define $\tilde{f}_{i,l}(\mathbf{w})=   \frac{1}{ \mathbf{w}^H  {\mathbf{h}}_{ {i}, {l}} {\mathbf{h}}_{ {i}, {l}}^H\mathbf{w}}   $, where its first order derivative is given by 
 \begin{align}\label{de1}
&\bigtriangledown \tilde{f}_{i,l}(\mathbf{w})= \frac{-2  {\mathbf{h}}_{ {i}, {l}} {\mathbf{h}}_{ {i}, {l}}^H\mathbf{w}  }{     \left(\mathbf{w}^H  {\mathbf{h}}_{ {i}, {l}} {\mathbf{h}}_{ {i}, {l}}^H\mathbf{w}\right)^{2}   } ,
 \end{align} 
 and its second order derivative is given by
  \begin{align}\nonumber
&\bigtriangledown^2 \tilde{f}_{i,l}(\mathbf{w})=     \frac{6   {\mathbf{h}}_{ {i}, {l}}        {\mathbf{h}}_{ {i}, {l}}  ^H   }
{\left(\mathbf{w}^H  {\mathbf{h}}_{ {i}, {l}} {\mathbf{h}}_{ {i}, {l}}^H\mathbf{w}\right)^{2} }
,
 \end{align}   
 which is positive semidefinite. Therefore,  the constraint in \eqref{411} is in a convex form. 

In oder to remove the CSI errors, $\mathbf{e}_{k,l}$, from the optimization problem, we first   define the following feasibility problem:
 \begin{problem}\label{pb:x}
  \begin{alignat}{2}
\rm{Find} &\quad \mathbf{e}_{k,l} \\
\rm{s.t.} & \quad     \mathbf{w}\mathbf{w}^H \mathbf{e}_{ {k}, {l}} =- \mathbf{w}\mathbf{w}^H\hat{\mathbf{g}}_{ {k}, {l}}
\\ &\quad 
\mathbf{e}_{ {k}, {l}}^H \mathbf{e}_{ {k}, {l}}\leq \sigma^{2}     .\end{alignat}
\end{problem} 
The following lemma is provided to simplify  the optimization problem shown in \eqref{pb:42} and facilitate the application of SCA and SDR.  

\begin{lemma}
The robust beamforming  optimization problem in \eqref{pb:42} can be equivalently recast as follows: 
 \begin{problem}\label{pb:4}
  \begin{alignat}{2}
\rm{min.} &\quad t \\
\rm{s.t.} & \quad   \frac{1}{ \mathbf{w}^H\mathbf{h}_{i,i-1}\mathbf{h}_{i,i-1}^H\mathbf{w}} \leq t, 1\leq i\leq M  \label{stx400}
\\ &\quad   \sum^{N-1}_{ {k}=0}\sum^{M-1}_{ {l}=0} \frac{1}{  \left(   \left(\mathbf{w}^H \hat{\mathbf{g}}_{ {k}, {l}}\hat{\mathbf{g}}_{ {k}, {l}}^H\mathbf{w}\right)^{\frac{1}{2}}-\left(\sigma^{2}  \mathbf{w}^H\mathbf{w}\right)^{\frac{1}{2}} \right)^2 }\leq \epsilon  \label{stx40}
\\
& \quad   \sum^{M-1}_{ {l}=0} \frac{1}{\mathbf{w}^H\mathbf{h}_{   {i,l}}\mathbf{h}_{  {i,l}}^H\mathbf{w}  }\leq \epsilon_1 , 1\leq i \leq M \label{stx41}
\\
& \quad \mathbf{w}^H\mathbf{w}\leq 1, \label{stx42}
  \end{alignat}
\end{problem} 
if an optimal  solution of  \eqref{pb:4} can ensure that  the   problem shown in \eqref{pb:x} is infeasible for any $\hat{\mathbf{g}}_{k,l}$, $k\in \{0, N-1\}$ and $l\in \{0, \cdots, M-1\}$. Otherwise,  the robust beamforming  optimization problem in \eqref{pb:42} is  infeasible. 

\end{lemma}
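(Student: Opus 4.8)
The plan is to exploit the fact that the problems in \eqref{pb:42} and \eqref{pb:4} share the same objective and the same constraints \eqref{411}/\eqref{stx400}, \eqref{st 31}/\eqref{stx41} and \eqref{st 32}/\eqref{stx42}, so that the whole claim reduces to comparing the single robust constraint \eqref{st 30} with \eqref{stx40} and relating their discrepancy to the feasibility of the problem in \eqref{pb:x}.

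First I would evaluate the inner worst case in \eqref{st 30}. Writing $\mathbf{w}^H(\hat{\mathbf{g}}_{k,l}+\mathbf{e}_{k,l})(\hat{\mathbf{g}}_{k,l}+\mathbf{e}_{k,l})^H\mathbf{w}=|\mathbf{w}^H(\hat{\mathbf{g}}_{k,l}+\mathbf{e}_{k,l})|^2$ and using that $x\mapsto x^{-1}$ is decreasing on $(0,\infty)$, the maximization over $\mathbf{e}_{k,l}$ in \eqref{st 30} is the reciprocal of $\min_{\mathbf{e}_{k,l}^H\mathbf{e}_{k,l}\leq\sigma^{2}}|\mathbf{w}^H(\hat{\mathbf{g}}_{k,l}+\mathbf{e}_{k,l})|^2$. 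To compute this minimum I would substitute the scalar $c=\mathbf{w}^H\mathbf{e}_{k,l}$: by the Cauchy--Schwarz inequality $|c|\leq\sigma\|\mathbf{w}\|$, and conversely every such $c$ is realized by $\mathbf{e}_{k,l}=c\,\mathbf{w}/\|\mathbf{w}\|^2$, so the minimum equals $\min_{|c|\leq\sigma\|\mathbf{w}\|}|\mathbf{w}^H\hat{\mathbf{g}}_{k,l}+c|^2$, i.e.\ the squared distance from $-\mathbf{w}^H\hat{\mathbf{g}}_{k,l}$ to a disk of radius $\sigma\|\mathbf{w}\|$ centred at the origin. This equals $\left((\mathbf{w}^H\hat{\mathbf{g}}_{k,l}\hat{\mathbf{g}}_{k,l}^H\mathbf{w})^{1/2}-(\sigma^{2}\mathbf{w}^H\mathbf{w})^{1/2}\right)^2$ when $|\mathbf{w}^H\hat{\mathbf{g}}_{k,l}|\geq\sigma\|\mathbf{w}\|$ and equals $0$ otherwise; hence the $(k,l)$-term of \eqref{st 30} matches the $(k,l)$-term of \eqref{stx40} in the former case and is $+\infty$ in the latter.

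Next I would show that these two cases are exactly separated by \eqref{pb:x}. Since $\mathbf{w}\neq\mathbf{0}$ (forced by \eqref{stx400}), the constraint $\mathbf{w}\mathbf{w}^H\mathbf{e}_{k,l}=-\mathbf{w}\mathbf{w}^H\hat{\mathbf{g}}_{k,l}$ is equivalent to $\mathbf{w}^H\mathbf{e}_{k,l}=-\mathbf{w}^H\hat{\mathbf{g}}_{k,l}$, and the minimum-norm vector solving the latter has norm $|\mathbf{w}^H\hat{\mathbf{g}}_{k,l}|/\|\mathbf{w}\|$; therefore, for a given $(k,l)$, the problem in \eqref{pb:x} is feasible if and only if $|\mathbf{w}^H\hat{\mathbf{g}}_{k,l}|\leq\sigma\|\mathbf{w}\|$, i.e.\ precisely when the $(k,l)$-term of \eqref{st 30} blows up. Consequently, if an optimal solution $\mathbf{w}^\star$ of \eqref{pb:4} makes \eqref{pb:x} infeasible for every $(k,l)$, then at $\mathbf{w}^\star$ every term of \eqref{st 30} is finite and coincides with the corresponding term of \eqref{stx40}. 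Because on the region $\{\mathbf{w}:|\mathbf{w}^H\hat{\mathbf{g}}_{k,l}|>\sigma\|\mathbf{w}\|\ \forall k,l\}$ the problems \eqref{pb:42} and \eqref{pb:4} are constraint-by-constraint identical, and because \eqref{pb:4} is a relaxation of \eqref{pb:42} (any point feasible for \eqref{pb:42} satisfies \eqref{st 30}, hence lies in that region, hence is feasible for \eqref{pb:4} with the same objective value), $\mathbf{w}^\star$ is feasible for \eqref{pb:42} and attains the relaxed --- and therefore the true --- optimum, which yields the asserted equivalence.

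Finally, for the ``otherwise'' case I would argue by contradiction: were \eqref{pb:42} feasible, any feasible $\tilde{\mathbf{w}}$ would satisfy \eqref{st 30}, hence make \eqref{pb:x} infeasible for all $(k,l)$, and (by the coincidence of constraints noted above) $\tilde{\mathbf{w}}$ would also be feasible for \eqref{pb:4}. The step I expect to be the main obstacle is converting this into a statement about the \emph{optimal} solutions of \eqref{pb:4} --- that is, ruling out the scenario in which \eqref{pb:4} attains its minimum only at points where some $|\mathbf{w}^H\hat{\mathbf{g}}_{k,l}|\leq\sigma\|\mathbf{w}\|$ while still admitting feasible points in the ``good'' region. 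The cleanest route I see is to observe that the feasible set of \eqref{pb:42} coincides with the feasible set of \eqref{pb:4} intersected with that good region and carries the same objective there, so the optimal value of \eqref{pb:42} equals that of \eqref{pb:4} restricted to the good region; a minimizer of \eqref{pb:4} that happened to lie in the good region would then already be optimal for \eqref{pb:42}, so either such a minimizer exists (equivalence) or \eqref{pb:42} is infeasible --- the last implication being the one that deserves the most care and where, if needed, one would invoke uniqueness of the minimizer of \eqref{pb:4} or a monotonicity argument. Everything else --- the distance-to-a-disk evaluation, the pseudoinverse description of \eqref{pb:x}, and the relaxation bookkeeping --- is routine.
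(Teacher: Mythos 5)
Your proposal is correct in its core computation and arrives at the paper's closed form \eqref{eq63} by a genuinely different, more elementary route. The appendix treats the inner problem as the QCQP \eqref{eq48}, writes down the KKT system \eqref{kkt}, splits on $\lambda=0$ versus $\lambda\neq 0$, and in the latter case inverts $\mathbf{w}\mathbf{w}^H+\lambda\mathbf{I}_V$ through the explicit eigendecomposition \eqref{decom} before eliminating $\lambda^*$ via the active norm constraint. You instead collapse the minimization to the scalar $c=\mathbf{w}^H\mathbf{e}_{k,l}$, note via Cauchy--Schwarz that $c$ ranges exactly over the disk $|c|\le\sigma\|\mathbf{w}\|$ (each value being attained by $\mathbf{e}_{k,l}=c\,\mathbf{w}/\|\mathbf{w}\|^2$), and read off the worst case as a squared distance to a disk: $\left(|\mathbf{w}^H\hat{\mathbf{g}}_{k,l}|-\sigma\|\mathbf{w}\|\right)^2$ when $|\mathbf{w}^H\hat{\mathbf{g}}_{k,l}|\ge\sigma\|\mathbf{w}\|$, and $0$ otherwise. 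These two regimes are precisely the paper's $\lambda\neq 0$ and $\lambda=0$ branches, and your characterization of the second regime through the minimum-norm solution of $\mathbf{w}^H\mathbf{e}_{k,l}=-\mathbf{w}^H\hat{\mathbf{g}}_{k,l}$ is exactly the role that the feasibility problem \eqref{pb:x} plays in the appendix. What your derivation buys is transparency: the expression appearing in \eqref{stx40} is the true worst case only on the region $|\mathbf{w}^H\hat{\mathbf{g}}_{k,l}|\ge\sigma\|\mathbf{w}\|$, and your two-case formula makes that visible at a glance, whereas the KKT computation obtains it under the tacit assumption that the norm constraint is active. Your relaxation bookkeeping for the ``if'' direction (the feasible set of \eqref{pb:42} equals that of \eqref{pb:4} intersected with the good region, with identical objective there, so a minimizer of \eqref{pb:4} that lands in the good region is automatically optimal for \eqref{pb:42}) is also tighter than what the appendix spells out.

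The one step you flag as unresolved --- ruling out the scenario where \eqref{pb:4} attains its minimum only at points making \eqref{pb:x} feasible for some $(k,l)$ while \eqref{pb:42} still has feasible points --- is a genuine delicacy, but it is not one the paper closes either: the appendix proves only that any fixed $\mathbf{w}$ for which \eqref{pb:x} is feasible violates \eqref{st 61}, and the lemma's ``otherwise'' clause (together with Remark~1) simply asserts the dichotomy at the returned optimizer. So your proposal matches the paper's level of rigor on that point and exceeds it on the main computation; no correction to your argument is needed to reproduce what the paper actually establishes.
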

\begin{proof}
Please refer to the appendix. 
\end{proof}
{\it Remark 1:} The fact that strong CSI errors results in the infeasibility situation can be explained in the following. With strong CSI errors, it is very likely to have 
\begin{align}
\underset{  \mathbf{e}_{ {k}, {l}}^H \mathbf{e}_{ {k}, {l}}\leq \sigma^{2}}{{\rm min.}} \mathbf{w}^H(\hat{\mathbf{g}}_{ {k}, {l}}+\mathbf{e}_{ {k}, {l}}) (\hat{\mathbf{g}}_{ {k}, {l}}+\mathbf{e}_{ {k}, {l}})^H\mathbf{w} =0,
\end{align}
which leads to the situation that the constraint in \eqref{st 30} can never be satisfied, e.g., the problem is infeasible. We also note that an optimal solution obtained by solving   \eqref{pb:4} is not necessarily an optimal solution for \eqref{pb:42}. Or in other words, only if an optimal solution of \eqref{pb:4} fails the feasibility check for \eqref{pb:x}, one can claim that this optimal solution of \eqref{pb:4} is  also optimal  for \eqref{pb:42}.

{\it Remark 2:} The formulation in \eqref{pb:4} is general and can also be applicable to the case without CSI errors. In particular, by setting   $\sigma=0$, one can easily verify that the formulation in \eqref{pb:4} is indeed applicable  to the one without CSI errors.  

We note that problem \eqref{pb:4}  is still not convex, mainly due to the non-convex constraint in \eqref{stx40}. In the following, SCA is applied to obtained a low-complexity suboptimal solution.  

  Define $f_{k,l}(\mathbf{w})= \left(   \left(\mathbf{w}^H \hat{\mathbf{g}}_{ {k}, {l}}\hat{\mathbf{g}}_{ {k}, {l}}^H\mathbf{w}\right)^{\frac{1}{2}}-\left(\sigma^{2} \mathbf{w}^H\mathbf{w}\right)^{\frac{1}{2}} \right)^{-2}$, where its first-order derivative is given by
 \begin{align}\label{de2}
&\bigtriangledown f_{k,l}(\mathbf{w})=\\\nonumber &\frac{-2\left[ \left(\mathbf{w}^H \hat{\mathbf{g}}_{ {k}, {l}}\hat{\mathbf{g}}_{ {k}, {l}}^H\mathbf{w}\right)^{-\frac{1}{2}} \hat{\mathbf{g}}_{ {k}, {l}}\hat{\mathbf{g}}_{ {k}, {l}}^H\mathbf{w}-\sigma \left( \mathbf{w}^H\mathbf{w}\right)^{-\frac{1}{2}} \mathbf{w}\right]}{  \left(   \left(\mathbf{w}^H \hat{\mathbf{g}}_{ {k}, {l}}\hat{\mathbf{g}}_{ {k}, {l}}^H\mathbf{w}\right)^{\frac{1}{2}}-\sigma\left(  \mathbf{w}^H\mathbf{w}\right)^{\frac{1}{2}} \right)^3 } .
 \end{align}  
 
 By  applying the Taylor expansion to the constraints in \eqref{pb:4} at a feasible point $\mathbf{w}_0$ and also using  $\bigtriangledown \tilde{f}_{i,l}(\mathbf{w})$ in \eqref{de1} and $\bigtriangledown f_{k,l}(\mathbf{w})$ in \eqref{de2}, the optimization problem in \eqref{pb:4} can be approximated as follows:
 \begin{problem}\label{pb:5x}
  \begin{alignat}{2}
\rm{min.} &\quad t \\
\rm{s.t.} & \quad  \!\begin{aligned} \tilde{f}_{i,i-1}(\mathbf{w_0}) +\left[\bigtriangledown \tilde{f}_{i,i-1}(\mathbf{w}_0)\right]^H&\left(\mathbf{w}-\mathbf{w}_0\right)    \leq t,\\& 1\leq i\leq M\end{aligned}  
\\   &\quad   \sum^{N-1}_{ {k}=0}\sum^{M-1}_{ {l}=0}  \left( f_{k,l}(\mathbf{w_0}) +\left[\bigtriangledown f_{k,l}(\mathbf{w}_0)\right]^H\left(\mathbf{w}-\mathbf{w}_0\right)  \right)\leq \epsilon 
\\ 
& \quad \!\begin{aligned} & \sum^{M-1}_{ {l}=0} \left( \tilde{f}_{i,l}(\mathbf{w_0}) +\left[\bigtriangledown \tilde{f}_{i,l}(\mathbf{w}_0)\right]^H\left(\mathbf{w}-\mathbf{w}_0\right)  \right)\leq \epsilon_1 ,\\ &\qquad\qquad\qquad\qquad\qquad\qquad\qquad  1\leq i \leq M\end{aligned}
\\
& \quad \mathbf{w}^H\mathbf{w}\leq 1,
  \end{alignat}
\end{problem} 
to which a straightforward  iterative SCA algorithm facilitated by convex optimization solvers, such as CVX, can be applied  to find a solution \cite{Boyd}. 

{\it Remark 3:}  We note that the problem in \eqref{pb:5x} is only an approximated form of the original problem in \eqref{pb:4}, which means that the solution obtained by SCA is only a suboptimal solution of \eqref{pb:4}. 

{\it Remark 4:} One challenge for the implementation of SCA is to find a feasible   $\mathbf{w}_0$ which is required  for the SCA initialization.  However, given the large number of constraints in  \eqref{pb:4}, it is difficult to find a feasible $\mathbf{w}_0$. For the simulations carried out for the paper,   we simply use a randomly generated  $\mathbf{w}_0$. With this randomly generated $\mathbf{w}_0$, the use of SCA can still yield a solution which outperforms the benchmarking scheme, as shown in the next section. 

\vspace{-1em}
\subsection{Beamforming Design Without CSI Errors}
When $\sigma=0$, the problem in \eqref{pb:4} is degraded to a simplified form without CSI errors, to which SCA can be still applicable. We note that, because of the simplified form, SDR also becomes applicable \cite{5447068}, where   the advantage of using SDR is to avoid the iterations for updating $\mathbf{w}_0$ as in SCA. Therefore,  the complexity of SDR can be much smaller than that of SCA. In addition, the use of SDR can also avoid the challenging issue about generating $\mathbf{w}_0$, as discussed in Remark 4.  

By applying the principle of SDR, the problem in \eqref{pb:4} can be recasted to the following equivalent form:
 \begin{problem}\label{pb:6x}
  \begin{alignat}{2}
\rm{min.} &\quad t \\
\rm{s.t.} & \quad   \frac{1}{ \rm{tr} \{\mathbf{W}\mathbf{H}_{i,i-1} \}} \leq t, 1\leq i\leq M  
\\ &\quad   \sum^{N-1}_{ {k}=0}\sum^{M-1}_{ {l}=0} \frac{1}{  \rm{tr}\{  \mathbf{W}  {\mathbf{G}}_{ {k}, {l}}  \} }\leq \epsilon 
\\
& \quad   \sum^{M-1}_{ {l}=0} \frac{1}{\rm{tr}\{\mathbf{W}\mathbf{H}_{   {i,l}}\}  }\leq \epsilon_1 , 1\leq i \leq M
\\
& \quad \rm{tr}\{\mathbf{W}\}\leq 1
\\
& \quad  \mathbf{W}\succeq 0
\\
& \quad  \rm{rank}\{\mathbf{W}\} = 1,
  \end{alignat}
\end{problem} 
where $\mathbf{H}_{i,l}=\mathbf{h}_{i,l}\mathbf{h}_{i,l}^H$, ${\mathbf{G}}_{ {k}, {l}}={\mathbf{g}}_{ {k}, {l}} {\mathbf{g}}_{ {k}, {l}}^H$ and $\rm{tr}\{\cdot\}$ denotes the trace operator. 

The optimization problem in \eqref{pb:6x} is still not convex, mainly due to those constraints containing $ \frac{1}{ \rm{tr} \{\mathbf{W}\mathbf{H}_{k,l} \}}$ and $ \frac{1}{ \rm{tr} \{\mathbf{W}\mathbf{G}_{k,l} \}}$. By introducing auxiliary variables, the problem in \eqref{pb:6x} can be converted into the   equivalent form shown in the following: 
  \begin{problem}\label{pb:7x}
  \begin{alignat}{2}
\rm{min.} &\quad t \\
\rm{s.t.} & \quad   \frac{1}{ x_{i,i-1}} \leq t, 1\leq i\leq M  \label{st 70}
\\ &\quad   \sum^{N-1}_{ {k}=0}\sum^{M-1}_{ {l}=0} \frac{1}{  y_{k,l} }\leq \epsilon \label{st 71}
\\
& \quad   \sum^{M-1}_{ {l}=0} \frac{1}{x_{i,l}  }\leq \epsilon_1 , 1\leq i \leq M
\\ &\quad \rm{tr}\{  \mathbf{W}  {\mathbf{G}}_{ {k}, {l}}  \}=y_{k,l}, 0\leq k \leq N-1, 0\leq l\leq M-1,
\\ &\quad \rm{tr}\{\mathbf{W}\mathbf{H}_{   {i,l}}\}=x_{i,l}, 1\leq i \leq M, 0\leq l\leq M-1,
\\ &\quad  y_{k,l}\geq0, 0\leq k \leq N-1, 0\leq l\leq M-1,
\\ &\quad  x_{i,l}\geq 0, 1\leq i \leq M, 0\leq l\leq M-1,
\\
& \quad \rm{tr}\{\mathbf{W}\}\leq 1
\\
& \quad  \mathbf{W}\succeq 0
\\
& \quad  \rm{rank}\{\mathbf{W}\} = 1.
  \end{alignat}
\end{problem} 
 
We note that all the constraints are either affine or  convex, except the rank-one constraint. Take the constraint \eqref{st 71} as an example. Provided $y_{k,l}\geq0$, $\frac{1}{  y_{k,l} }$ is convex, and hence the left-hand side of the inequality, \eqref{st 71}, is also convex since it   is the sum of those convex functions, $\frac{1}{  y_{k,l} }$. By removing the rank-one constraint, the problem in \eqref{pb:7x} can be efficiently solved by using CVX.

{\it Remark 5:} Simulation results indicate that the rank of the SDR solution obtained for the problem shown in \eqref{pb:7x} is larger than one, and  therefore,  the Gaussian randomization method is needed, which means that  the obtained SDR solution is just a suboptimal solution \cite{5447068}.  

{\it Remark 6:} For a special case which is to maximize a single NOMA user's data rate,   the problem in \eqref{pb:7x} can be simplified as follows:
 \begin{problem}\label{pb:8x}
  \begin{alignat}{2}
\rm{min.} &\quad x_{1,0} \\
\rm{s.t.} & \quad     \rm{P9c-P9j}.\nonumber
  \end{alignat}
\end{problem} 
Simulation results show that the rank of the solution for the problem in \eqref{pb:8x} is always one, which means that the SDR solution is optimal for this single-user special case.

\begin{figure}[t] \vspace{-2em}
\begin{center}\subfigure[$V=4$]{\label{fig1a}\includegraphics[width=0.34\textwidth]{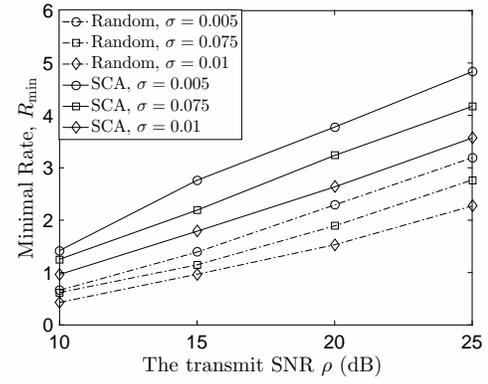}}
\subfigure[ $V=8$]{\label{fig1b}\includegraphics[width=0.34\textwidth]{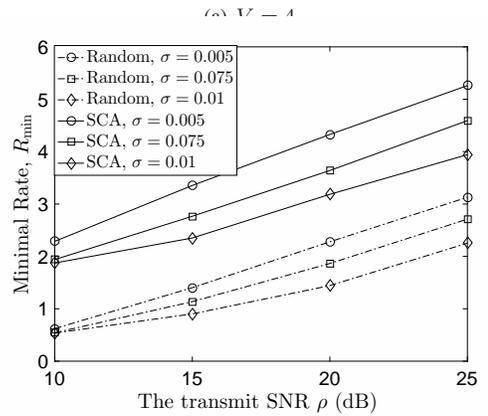}} \vspace{-0.5em}
\end{center}\vspace{-1em}
 \caption{The performance of MISO-OTFS-NOMA with CSI errors. $M=N=8$ and $R_0=0.5$ BPCU.  }\label{fig 1}\vspace{-2em}
\end{figure}

\section{Simulation Results}
In this section, the performance of the developed beamforming schemes  is studied by using computer simulations.   A simple two-path channel model is used. In particular, for $\rm{U}_0$, the  delay-Doppler indices for the two channel taps are $(0,0)$ and $(1,1)$, respectively.   With the subchannel spacing   $\Delta f = 2$ kHz,  the  maximal speed corresponding to the largest Doppler shift $ 250 $ Hz is $67.5$ km/h if the carrier frequency is $f_c=4$ GHz. There is no Doppler shift for the NOMA users. In Fig. \ref{fig 1}, the performance of the robust beamforming design is studied, where $R_{\min}={\rm min.}   \left\{ \log(1+\text{SINR}_{i}^{\text{II}}), 1\leq i\leq M \right\}$ and the random beamforming scheme is used as a benchmarking scheme. As can be observed from the figure,  the proposed low-complexity beamforming design can provide a significant performance gain over the random beamforming scheme. Furthermore, the performance of the proposed SCA  scheme is improved by reducing $\sigma$, i.e., the channel estimates become more accurate.  We note that the performance of the random beamforming scheme is also affected by the choices of $\sigma$, which can be explained in the following. For a randomly chosen $\mathbf{w}$, it is still possible for this $\mathbf{w}$ to make the problem shown in \eqref{pb:x} feasible, if $\sigma$ is large. If this event occurs, we set $R_{\min}=0$, since this event  means that the high-mobility user's targeted data rate cannot be met, as discussed in Remark~1. 

Comparing Fig. \ref{fig1a} to Fig. \ref{fig1b}, one can also observe that the performance of the proposed beamforming design is improved by increasing $V$, whereas the performance of random beamforming stays the same. This is due to the fact that the proposed beamforming scheme can more effectively use the spatial degrees of freedom than random beamforming.  Fig. \ref{fig 2} shows the performance of the proposed beamforming design for the case without CSI errors. Comparing Fig. \ref{fig 2} to Fig. \ref{fig 1}, one can observe that removing the effects of CSI errors improves the performance of the proposed scheme, which is expected. Recall that in the case without CSI errors, the SDR method is also applicable. As can be observed from Fig. \ref{fig2a}, the SDR method outperforms the SCA method, even though the rank of the SDR solution is not one and the Gaussian randomization method has to be applied.  This performance loss of SCA might be due to the used Taylor expansion, since the problem shown in \eqref{pb:5x} is just an approximation to the original problem considered in \eqref{pb:4}.    Fig. \ref{fig2b} shows that, for the single-user case, the performance of the proposed beamforming schemes achieve the same performance. It is worth pointing out that the  rank of the SDR solution for this case is always one, which means that SDR solution is optimal. Therefore, Fig. \ref{fig2b} indicates that the SCA solution is also optimal for the single-user   case.   Furthermore, we note that both the SDR and SCA based schemes outperform  the random beamforming scheme, as can be observed from the two subfigures in Fig. \ref{fig 2}.

\begin{figure}[t] \vspace{-2em}
\begin{center}\subfigure[Multi-User Case]{\label{fig2a}\includegraphics[width=0.34\textwidth]{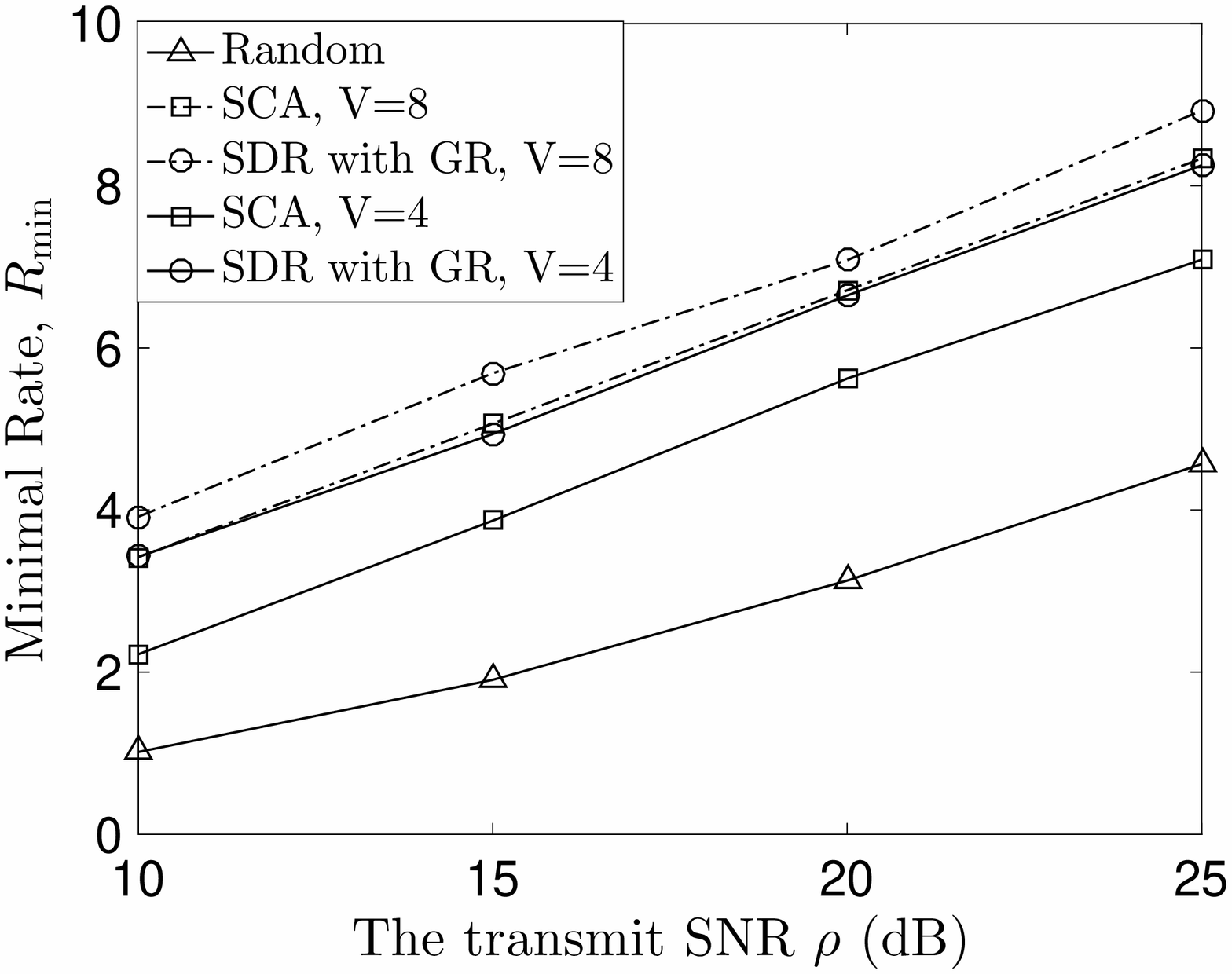}}
\subfigure[Single-User Case]{\label{fig2b}\includegraphics[width=0.34\textwidth]{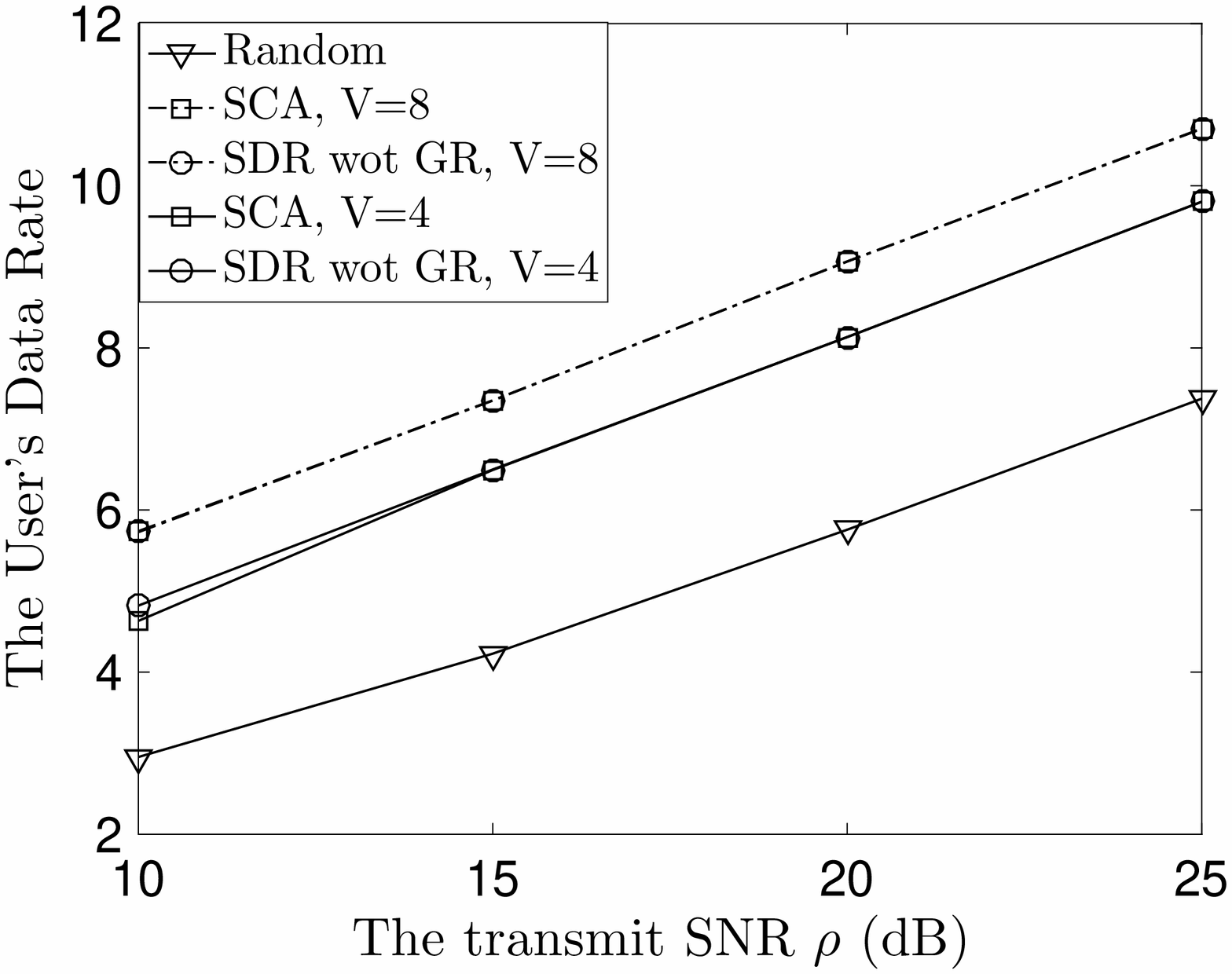}} \vspace{-0.5em}
\end{center}\vspace{-1em}
 \caption{The performance of MISO-OTFS-NOMA without channel estimation errors. $M=N=8$ and $R_0=0.5$ BPCU.  }\label{fig 2}\vspace{-2em}
\end{figure}

\vspace{-0.5em}
\section{Conclusions}
In this paper, we have considered the design of   beamforming   for OTFS-NOMA downlink transmission.  The objective of the considered  beamforming design is to    maximize the  low-mobility NOMA users' data rates while guaranteeing that the high-mobility user's targeted data rate can  be met.  Both the cases with and without CSI errors have been considered, where low-complexity solutions were obtained by applying SCA and SDR. Simulation results have also been provided to show that the use of the proposed beamforming schemes can yield a significant performance gain over random beamforming.  In this paper, we focused on  the downlink transmission scenario, in which more than one high-mobility user can   be accommodated. In particular, we can use the   resource blocks in the delay-Doppler plane to serve multiple high-mobility users. The multiple high-mobility users' signals cause interference to each other in the time-frequency plane, but the use of equalizers   can  ensure that there is no interference in the delay-Doppler plane. As a result, the   optimization problem formulated in this paper can be straightforwardly extended to the case with multiple high-mobility users, by adding more constraints about the high-mobility users' target data rates.  However,  in the uplink scenario with multiple high-mobility users, one high-mobility user's signals can cause interference to the other users' signals in both the time-frequency and delay-Doppler planes. The design for robust OTFS-NOMA transmission for such a challenging uplink scenario is an important topic for future research.
\vspace{-0.5em}
%%%%%%%%%%%%%%%%%%%%%%%%%%%%%%%%%%%%%%%%%%%%%%%%%%%%%%%%%%%%%%%%%%%%%%
\appendix 
 
The robust beamforming optimization problem in \eqref{pb:3} can be first rewritten as follows: 
 \begin{problem}\label{pb:6}
  \begin{alignat}{2}
\rm{max.} &\quad t \\
\rm{s.t.} & \quad    \mathbf{w}^H\mathbf{h}_{i,i-1}\mathbf{h}_{i,i-1}^H\mathbf{w} \geq t, 1\leq i\leq M  
\\ &\quad \label{st 61}\!\begin{aligned}  \sum^{N-1}_{ {k}=0}\sum^{M-1}_{ {l}=0}   (\underset{  \mathbf{e}_{ {k}, {l}}^H \mathbf{e}_{ {k}, {l}}\leq \sigma^{2}}{{\rm min.}} \mathbf{w}^H(\hat{\mathbf{g}}_{ {k}, {l}}+\mathbf{e}_{ {k}, {l}})  \\ \left.(\hat{\mathbf{g}}_{ {k}, {l}}+\mathbf{e}_{ {k}, {l}})^H\mathbf{w}\right)^{-1}  \leq \epsilon \end{aligned}
\\\nonumber 
& \quad  \eqref{st 31},  \eqref{st 32}.
  \end{alignat}
\end{problem} 

The optimization problem contained in  the   constraint   \eqref{st 61} can be rewritten as follows:   \begin{align}\nonumber
 {\rm min.}&\qquad  \mathbf{e}_{ {k}, {l}}^H\mathbf{w}\mathbf{w}^H\mathbf{e}_{ {k}, {l}}+2 \mathcal{R}\left\{ \hat{\mathbf{g}}_{ {k}, {l}}^H\mathbf{w}\mathbf{w}^H\mathbf{e}_{ {k}, {l}}\right\}+ \hat{\mathbf{g}}_{ {k}, {l}}^H\mathbf{w}\mathbf{w}^H \hat{\mathbf{g}}_{ {k}, {l}}\\  {\rm s.t.} &\qquad \mathbf{e}_{ {k}, {l}}^H \mathbf{e}_{ {k}, {l}}\leq \sigma^{2}   , \label{eq48}
\end{align}  
which  is convex as it is a quadratically constrained quadratic program (QCQP). Because   the rank of $\mathbf{w}\mathbf{w}^H$  is one, the considered QCQP can be solved with a closed-form solution as shown in the following. 

  By applying the Karush-Kuhn-Tucker  (KKT) conditions, the optimal solution in \eqref{eq48} can be found by solving the following equations \cite{Boyd}: 
  \begin{eqnarray}\label{kkt}
\left\{\begin{array}{ll}\left(\mathbf{w}\mathbf{w}^H+\lambda \mathbf{I}_V \right)\mathbf{e}_{ {k}, {l}} + \mathbf{w}\mathbf{w}^H\hat{\mathbf{g}}_{ {k}, {l}}  =0\\
\lambda(\mathbf{e}_{ {k}, {l}}^H \mathbf{e}_{ {k}, {l}}- \sigma^{2} )=0\\
\mathbf{e}_{ {k}, {l}}^H \mathbf{e}_{ {k}, {l}}\leq \sigma^{2}  
\end{array}
\right.,
\end{eqnarray}  
where $\lambda$ is the Lagrange multiplier. Depending on the choice of $\lambda$, the optimal solution can be obtained differently as shown in the following. 
\vspace{-1em}
\subsection{For the case   $\lambda=0$} We note that, if $\lambda=0$, the KKT conditions can be simplified  as follows:
  \begin{eqnarray}\label{kktx}
\left\{\begin{array}{ll} \mathbf{w}\mathbf{w}^H \mathbf{e}_{ {k}, {l}} =- \mathbf{w}\mathbf{w}^H\hat{\mathbf{g}}_{ {k}, {l}}  \\ 
\mathbf{e}_{ {k}, {l}}^H \mathbf{e}_{ {k}, {l}}\leq \sigma^{2} 
\end{array}
\right..
\end{eqnarray}

Suppose that there is a solution to   \eqref{kktx}, i.e., the optimization problem in \eqref{pb:x} is feasible, which means that the objective in \eqref{eq48} becomes zero as shown in the following: 
 \begin{align}
  \mathbf{e}_{ {k}, {l}}^H\left(\mathbf{w}\mathbf{w}^H\hat{\mathbf{g}}_{ {k}, {l}}+ \mathbf{w}\mathbf{w}^H\mathbf{e}_{ {k}, {l}}\right)+\hat{\mathbf{g}}_{ {k}, {l}}^H\left(\mathbf{w}\mathbf{w}^H \hat{\mathbf{g}}_{ {k}, {l}}\right.\\\nonumber\left.+ \mathbf{w}\mathbf{w}^H\mathbf{e}_{ {k}, {l}}\right)
 =0.
\end{align}  
As a result, the left-hand side of the inequality   in \eqref{st 61} becomes infinite, which means that the constraint, \eqref{st 61}, can never be satisfied. Or equivalently, the original beamforming optimization  problem in \eqref{pb:42} is infeasible. 
\vspace{-1em}

\subsection{For the case $\lambda \neq 0$}  The KKT conditions shown in \eqref{kkt} can be simplified as follows:
  \begin{eqnarray}\label{kktxxx}
\left\{\begin{array}{ll}\left(\mathbf{w}\mathbf{w}^H+\lambda  \mathbf{I}_V\right)\mathbf{e}_{ {k}, {l}} =- \mathbf{w}\mathbf{w}^H\hat{\mathbf{g}}_{ {k}, {l}}  \\
 \mathbf{e}_{ {k}, {l}}^H \mathbf{e}_{ {k}, {l}}- \sigma^{2} =0 
\end{array}
\right..
\end{eqnarray}  
The fact that the rank of $\mathbf{w}\mathbf{w}^H$ is one can be used to find a closed-form solution for the KKT conditions, as shown in the following.
To solve $\left(\mathbf{w}\mathbf{w}^H+\lambda  \mathbf{I}_V\right)\mathbf{e}_{ {k}, {l}} =- \mathbf{w}\mathbf{w}^H\hat{\mathbf{g}}_{ {k}, {l}} $, we decompose   $(\mathbf{w}\mathbf{w}^H+\lambda  \mathbf{I}_V)$ as follows: 
\begin{align}\label{decom}
\mathbf{w}\mathbf{w}^H+\lambda  \mathbf{I}_V = \mathbf{U}\boldsymbol \Lambda \mathbf{U}^H,
\end{align}
where 
$\mathbf{U}=\begin{bmatrix}\frac{\mathbf{w}}{|\mathbf{w}^H\mathbf{w}|}&\mathbf{w}_{\perp}^{1}&\cdots &\mathbf{w}_{\perp}^{V-1}\end{bmatrix}$, $\mathbf{w}_{\perp}^{v}$ is a normalized vector orthogonal to $\mathbf{w}$, and $\boldsymbol \Lambda =\text{diag}\{\mathbf{w}^H\mathbf{w}+\lambda,\lambda, \cdots, \lambda\}$.

By applying this decomposition, the first condition  in \eqref{kktxxx} can be rewritten as follows:
\begin{align}
\mathbf{U}\boldsymbol \Lambda \mathbf{U}^H\mathbf{e}_{ {k}, {l}} =- \mathbf{w}\mathbf{w}^H\hat{\mathbf{g}}_{ {k}, {l}} ,
\end{align}
which means that the optimal solution is given by
\begin{align}
\mathbf{e}_{ {k}, {l}}^*(\lambda^*) =-  \mathbf{U}\boldsymbol \Lambda^{-1}\mathbf{U}^H\mathbf{w}\mathbf{w}^H\hat{\mathbf{g}}_{ {k}, {l}} ,
\end{align}
where $\lambda^*$ is obtained by ensuring that $\mathbf{e}_{ {k}, {l}}^*(\lambda^*) ^H\mathbf{e}_{ {k}, {l}}^*(\lambda^*) =\sigma^2$. We note that a closed-form expression of $\lambda^*$ is difficult to find, but a closed-form expression for the minimum of the objective function in \eqref{eq48} can be found, as shown in the following.

We first   simplify the expression of $\mathbf{e}_{ {k}, {l}}^*(\lambda^*) $. Because of the used decomposition in \eqref{decom}, we have the following property: 
\[
\mathbf{U}^H\mathbf{w} = \begin{bmatrix}\frac{\mathbf{w}^H\mathbf{w}}{|\mathbf{w}^H\mathbf{w}|}&0&\cdots &0 \end{bmatrix}^T.
\]
Therefore, the optimal solution can be further simplified as follows:
\begin{align}
\mathbf{e}_{ {k}, {l}}^*(\lambda^*)  =-  \frac{\mathbf{w}\mathbf{w}^H\hat{\mathbf{g}}_{ {k}, {l}} }{\mathbf{w}^H\mathbf{w}+\lambda^*} .
\end{align}

With this optimal solution, the minimum of the objective function in \eqref{eq48} can be written as follows:
\begin{align}\label{psi}
\psi \triangleq& (\mathbf{e}_{ {k}, {l}}^*(\lambda^*) )^H\mathbf{w}\mathbf{w}^H\mathbf{e}_{ {k}, {l}}^*(\lambda^*) + (\mathbf{e}_{ {k}, {l}}^*(\lambda^*) )^H\mathbf{w}\mathbf{w}^H\hat{\mathbf{g}}_{ {k}, {l}}\\\nonumber &+\hat{\mathbf{g}}_{ {k}, {l}}^H\mathbf{w}\mathbf{w}^H\mathbf{e}_{ {k}, {l}}^*(\lambda^*)+\hat{\mathbf{g}}_{ {k}, {l}}^H\mathbf{w}\mathbf{w}^H \hat{\mathbf{g}}_{ {k}, {l}} .
\end{align}
We first note that the first term in the objective can be expressed as follows:
\begin{align}\label{eqx1}
& (\mathbf{e}_{ {k}, {l}}^*(\lambda^*) )^H\mathbf{w}\mathbf{w}^H\mathbf{e}_{ {k}, {l}} ^*(\lambda^*) \\\nonumber
 = &\frac{\hat{\mathbf{g}}_{ {k}, {l}} ^H\mathbf{w}\mathbf{w}^H}{\mathbf{w}^H\mathbf{w}+\lambda^*} \mathbf{w}\mathbf{w}^H \frac{\mathbf{w}\mathbf{w}^H\hat{\mathbf{g}}_{ {k}, {l}} }{\mathbf{w}^H\mathbf{w}+\lambda^*}  
 = \frac{\hat{\mathbf{g}}_{ {k}, {l}} ^H\mathbf{w}\mathbf{w}^H\mathbf{w}\mathbf{w}^H   \mathbf{w}\mathbf{w}^H\hat{\mathbf{g}}_{ {k}, {l}}  }{(\mathbf{w}^H\mathbf{w}+\lambda^*)^2}. 
\end{align}
In order to obtain an expression without $\lambda^*$, we note that   $ (\mathbf{e}_{ {k}, {l}}^*(\lambda^*) )^H \mathbf{e}_{ {k}, {l}}^*(\lambda^*) - \sigma^{2}  =0 $, which means the following:
 \begin{align}
 \frac{\hat{\mathbf{g}}_{ {k}, {l}}^H\mathbf{w}\mathbf{w}^H }{\mathbf{w}^H\mathbf{w}+\lambda^*}   \frac{\mathbf{w}\mathbf{w}^H\hat{\mathbf{g}}_{ {k}, {l}} }{\mathbf{w}^H\mathbf{w}+\lambda^*}  -\sigma^{2}   =0,
\end{align}
which can be further  written as follows:
 \begin{align}\label{eq60}
(\mathbf{w}^H\mathbf{w}+\lambda^*)^2=\sigma^{-2} \mathbf{w}^H \mathbf{w}\hat{\mathbf{g}}_{ {k}, {l}}^H\mathbf{w}\mathbf{w}^H\hat{\mathbf{g}}_{ {k}, {l}}  .
\end{align}
By substituting \eqref{eq60} in \eqref{eqx1}, the first term in the objective function \eqref{psi} can be simplified as follows:
\begin{align}
 &(\mathbf{e}_{ {k}, {l}}^*(\lambda^*))^H\mathbf{w}\mathbf{w}^H\mathbf{e}_{ {k}, {l}} ^*(\lambda^*)\\\nonumber
  = &\frac{(\mathbf{w}^H\mathbf{w})^2\hat{\mathbf{g}}_{ {k}, {l}} ^H\mathbf{w} \mathbf{w}^H\hat{\mathbf{g}}_{ {k}, {l}}  }{\sigma^{-2} \mathbf{w}^H \mathbf{w}\hat{\mathbf{g}}_{ {k}, {l}}^H\mathbf{w}\mathbf{w}^H\hat{\mathbf{g}}_{ {k}, {l}}  }  = \sigma^{2}  \mathbf{w}^H\mathbf{w} ,
\end{align}
where $\lambda^*$ is removed. 
On the other hand, the third term in the objective function \eqref{psi} can be simplified as follows:
\begin{align}
 \hat{\mathbf{g}}_{ {k}, {l}}^H\mathbf{w}\mathbf{w}^H\mathbf{e}_{ {k}, {l}}^*(\lambda^*)=&-
  \hat{\mathbf{g}}_{ {k}, {l}}^H\mathbf{w}\mathbf{w}^H \frac{\mathbf{w}\mathbf{w}^H\hat{\mathbf{g}}_{ {k}, {l}} }{\mathbf{w}^H\mathbf{w}+\lambda^*} \\\nonumber
  =&-\mathbf{w}^H\mathbf{w}
  \frac{ \hat{\mathbf{g}}_{ {k}, {l}}^H\mathbf{w}\mathbf{w}^H\hat{\mathbf{g}}_{ {k}, {l}} }{\sqrt{\sigma^{-2} \mathbf{w}^H \mathbf{w}\hat{\mathbf{g}}_{ {k}, {l}}^H\mathbf{w}\mathbf{w}^H\hat{\mathbf{g}}_{ {k}, {l}}  }} 
  \\\nonumber
  =&- \sqrt{\sigma^{2} \mathbf{w}^H \mathbf{w}\hat{\mathbf{g}}_{ {k}, {l}}^H\mathbf{w}\mathbf{w}^H\hat{\mathbf{g}}_{ {k}, {l}}  }.
\end{align}
Therefore, the minimum of the objective function \eqref{psi}  is given by
\begin{align}\nonumber
\psi=&  \sigma^{2} \mathbf{w}^H\mathbf{w}  -  2 \sqrt{\sigma^{2} \mathbf{w}^H \mathbf{w}\hat{\mathbf{g}}_{ {k}, {l}}^H\mathbf{w}\mathbf{w}^H\hat{\mathbf{g}}_{ {k}, {l}}  }+\hat{\mathbf{g}}_{ {k}, {l}}^H\mathbf{w}\mathbf{w}^H \hat{\mathbf{g}}_{ {k}, {l}}
 \\ \label{eq63}
 =& \left( \left(\sigma^{2}\mathbf{w}^H\mathbf{w}\right)^{\frac{1}{2}}  -  \left(\mathbf{w}^H \hat{\mathbf{g}}_{ {k}, {l}}\hat{\mathbf{g}}_{ {k}, {l}}^H\mathbf{w}\right)^{\frac{1}{2}}\right)^2.
\end{align}
By substituting \eqref{eq63} in \eqref{pb:6}, the lemma is proved.

\vspace{-1em}
 
   \bibliographystyle{IEEEtran}
\bibliography{IEEEfull,trasfer}

 \begin{IEEEbiography}[{\includegraphics[width=1in,
    height=1.25in,clip, keepaspectratio]{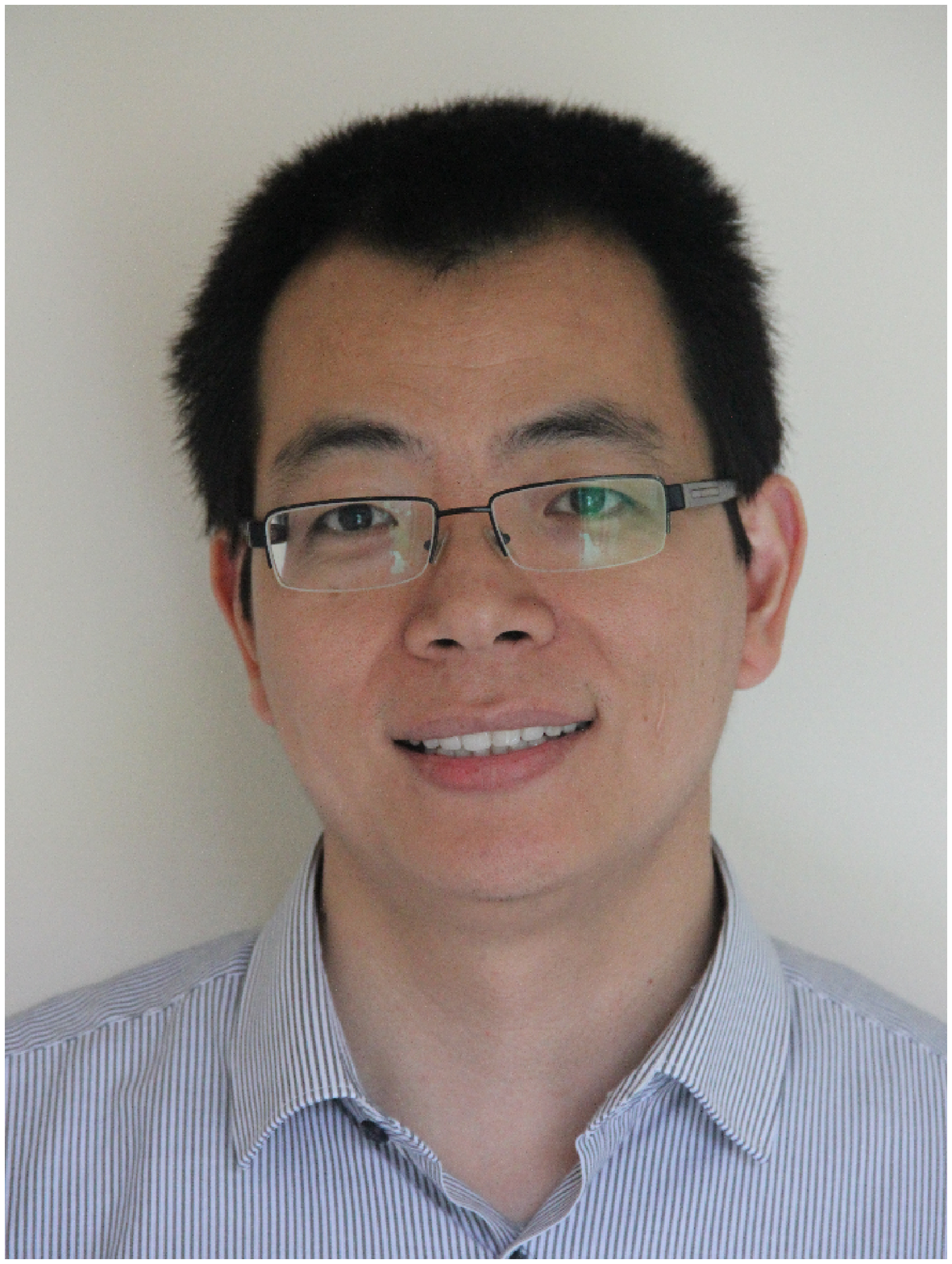}}]{Zhiguo
    Ding} (S'03-M'05-SM'15) received his B.Eng in Electrical Engineering
    from the Beijing University of Posts and Telecommunications in
    2000, and the Ph.D degree in Electrical Engineering from Imperial
    College London in 2005.
      From Jul. 2005 to
    Apr. 2018, he was working in Queen's University Belfast, Imperial
    College, Newcastle University and Lancaster University. Since Apr. 2018, he has been
    with the University of Manchester as a  Professor in Communications. From Oct. 2012 to Sept. 2018, he has also been an academic visitor in Princeton University.
    
    Dr Ding' research interests are 5G networks, game theory, cooperative and energy harvesting networks and statistical signal processing.
    He is serving  as an Area Editor for {\it IEEE Open Journal of the Communications Society}, an Editor
      for {\it IEEE Transactions on Communications} and {\it IEEE Transactions on Vehicular Technology}, and was an Editor for {\it IEEE Wireless Communication Letters}, {\it IEEE Communication Letters} and {\it Journal of Wireless Communications and Mobile Computing}.
      He received the best paper
      award in IET ICWMC-2009 and IEEE WCSP-2014,   the EU Marie Curie Fellowship 2012-2014, the Top IEEE TVT Editor 2017,  2018 IEEE Communication Society Heinrich Hertz Award, 2018 IEEE Vehicular Technology Society Jack Neubauer Memorial Award and 2018 IEEE Signal Processing Society Best Signal Processing Letter Award. \end{IEEEbiography}\vspace{-1em}
   \end{document}